\newtheorem{theorem}{Theorem}
\newtheorem{lemma}[theorem]{Lemma}
\newtheorem{corollary}[theorem]{Corollary}
\title{Rank Aggregation: New Bounds for MCx}
\begin{document}
\author[1]{Daniel Freund}
\author[2]{David P.\ Williamson\thanks{The second author was supported in part by NSF grant CCF-1115256.}}
\affil[1]{Center for Applied Mathematics, Cornell University\\
  Ithaca, NY, USA\\
  \texttt{df365@cornell.edu}}
\affil[2]{School of Operations Research and Information Engineering, Cornell University\\
Ithaca, NY, USA\\
\texttt{dpw@cs.cornell.edu}}
\maketitle

\begin{abstract}
The rank aggregation problem has received significant recent attention within the computer science community. Its applications today range far beyond the original aim of building metasearch engines to problems in machine learning, recommendation systems and more. Several algorithms have been proposed for these problems, and in many cases approximation guarantees have been proven for them.   However, it is also known that some Markov chain based algorithms (MC1, MC2, MC3, MC4) perform extremely well in practice, yet had no known performance guarantees.  We prove supra-constant lower bounds on approximation guarantees for all of them. We also raise the lower bound for sorting by Copeland score from $\frac{3}{2}$ to $2$ and prove an upper bound of $11$, before showing that in particular ways, MC4 can nevertheless be seen as a generalization of Copeland score.
 \end{abstract}

\section{Introduction}

We consider the problem of ranking a set of elements given a number of input rankings of the set.  The goal is to find a ranking that minimizes the deviation from the input set of rankings.  This problem has been well-studied in voting and social choice theory, but recent interest of the computer science community was started by a paper of Dwork, Kumar, Naor, and Sivakumar \cite{Dwork}, who called the problem the {\em rank aggregation problem}.  Dwork et al.\ studied the problem in the context of metasearch; that is, aggregating the results of multiple search engines into a single ranking.

To be more precise, we suppose that the input rankings are given as permutations $\pi_i: [n] \rightarrow [n]$ for $i=1,\ldots,m$; the goal is to find another permutation $\sigma: [n] \rightarrow [n]$ to minimize $\sum_{i=1}^m K(\sigma, \pi_i)$, where $K$ is the Kendall distance, a measure of the distance between the two permutations $\sigma$ and $\pi_i$.  From the input permutations, we can obtain a complete weighted directed graph $G=(V,A)$ where $V=[n]$ and the weight $w_{ij}$ of arc $(i,j)$ is the fraction of permutations that rank $i$ ahead of $j$.

Quite a number of different algorithms have been proposed for performing rank aggregation.  In Borda scoring, the vertices of the graph are sorted in nondecreasing order of their weighted indegree.  The Copeland score of an element $i$ is the number of $j$ such that $i$ is ranked above $j$ in at least half the rankings; another method is to rank by nondecreasing Copeland score. Ailon, Charikar, and Newman \cite{Newman} gave an algorithm motivated by the Quicksort algorithm in which at each step, a random element $i$ is chosen as a ``pivot''; remaining elements $j$ are then ordered before or after the pivot  $i$ depending on whether $w_{ji}$ or $w_{ij}$ is greater, and then the elements before and after the pivot are recursively ordered.  Deterministic variants of these pivoting algorithms were given by van Zuylen and Williamson \cite{Anke2}.  Motivated by the PageRank algorithm \cite{PageRank}, Dwork et al.\ proposed four Markov chain-based methods to obtain an overall ranking; they called these algorithms MC1, MC2, MC3, and MC4.  Each chain is on the set $[n]$ with different transition probabilities specified.  To obtain an ordering, the stationary probabilities of the chain are computed, and the elements are then ordered in nonincreasing order of stationary probability.  In computational experiments, Dwork et al.\ showed that MC4 was especially effective in finding a near-optimal solution to the rank aggregation problem on data drawn from metasearch.  These results were further validated in experiments of Schalekamp and van Zuylen \cite{Anke} on other data sets, who also found that MC4 outperformed other methods, including the Borda algorithm and several variants of the pivoting algorithms.  Dwork et al.\ argue that MC2 and MC3 generalize the Borda scoring method, and that MC4 is a generalization of the Copeland scoring method.

Many of these algorithms have been studied from an approximation algorithms perspective; we say that an algorithm for the rank aggregation algorithm is an $\alpha$-approximation algorithm if it runs in polynomial time and on any input returns a solution within a factor of $\alpha$ of the optimal solution for that input.  The factor $\alpha$ is sometimes called the performance guarantee or approximation factor of the algorithm.  Ailon et al.\ give a $\frac{4}{3}$-approximation algorithm via an LP-based variant of their pivoting algorithm.  Coppersmith, Fleischer, and Rudra \cite{CFR} show that Borda scoring is a 5-approximation algorithm.  Kenyon-Mathieu and Schudy \cite{Claire} show that for any fixed $\epsilon > 0$, they can give a $(1+\epsilon)$-approximation algorithm for the problem, although the running time of the algorithm is doubly exponential in $1/\epsilon$.
However, no bound on the approximation factor of the Markov chain-based algorithms is known, despite their excellent performance in practice.  Determining the performance guarantee of these algorithms was raised as an open question by Schalekamp and van Zuylen; they give an example to show that it must be at least $\frac{3}{2}$.

In this paper, we resolve the question of Schalekamp and van Zuylen by showing -- somewhat surprisingly -- that the Markov chain-based algorithms all have bad performance guarantees.  In particular, for MC1, MC2, and MC3, we give an example to show that these algorithms have an arbitrarily bad performance guarantee on a set of just three elements.  For the MC4 algorithm, we give a family of examples that show that it does not have any sublinear performance guarantee.

We also study the performance of the Copeland algorithm.  Schalekamp and van Zuylen show that its approximation factor cannot be better than $\frac{3}{2}$; we improve this to show that it cannot be better than 2.  By drawing on the proofs of Coppersmith et al.\ for the Borda scoring algorithm, we are able to show that Copeland is a 11-approximation algorithm.  Fagin, Kumar, Mahdian, Sivakumar, and Vee \cite{FKMSV}, in an unpublished manuscript, have shown that Copeland is a 4-approximation algorithm.  We include our result so that there is some published bound on Copeland in the literature.  We also show a strong connection between particular implementations of MC4 given in the literature, and Copeland scoring.

The rest of the paper is structured as follows.  We introduce notation and give the precise definition of the rank aggregation problem in Section \ref{sec:notation}, as well as the definitions of the four Markov chain-based algorithms for it.  In Section \ref{trianglesection}, we give a simple way of characterizing the ordering returned by MC4 that will help with our analysis; we then show that both the Copeland and MC4 algorithms can be twice the optimal solution on a three element example.  We show the bad examples for MC1, MC2, and MC3 in Section \ref{mcsection}, and the family of bad examples for MC4 in Section \ref{mc4section}.  We give our 11-approximation for the Copeland score in Section \ref{copelandsection}.  We show the connection between Copeland and MC4 in Section \ref{mc4gensection}, and conclude in Section \ref{sec:conc}.

\section{Setting and Notation}
\label{sec:notation}

For consistency with Coppersmith et al.\ \cite{CFR} we denote $\{0,..,n-1\}$ as $[n]$. We only consider the \emph{full rank aggregation problem}, in which we are given permutations $\pi_1,\ldots,\pi_m:[n]\to[n]$ and the goal is to output a permutation $\sigma$ that minimizes the cost
\[c_R(\sigma)=\sum_{i=1}^m K(\sigma,\pi_i)\text{, where }
K(\sigma,\pi_i):=|\{(u,v)\in[n]^2: u<v, \sigma(u)<\sigma(v), \pi_i(u)>\pi_i(v)\}|\]
denotes the Kendall distance --- or Kendall tau distance --- between permutations $\sigma$ and $\pi_i$; the distance is simply the number of pairs of elements such that the ordering of the pair is different in $\sigma$ and $\pi_i$.  A permutation $\sigma$ will be written as a row vector, in which the $i$th coordinate holds the value $\sigma^{-1}(i)$. We will say that $\sigma^{-1}(i)$ is the $ith$ highest element in the ranking $\sigma$. In particular, this means that we will consider $i$ ranked higher than $j$ by $\sigma$ if $\sigma(i)<\sigma(j)$.

In the weighted feedback arc set problem, we are given a complete graph $G=(V,A)$ with weights $w_{ij},w_{ji}$ for each $i,j\in V$. We say that weights $w_{ij}$ obey the probability constraints if $w_{ij}+w_{ji}=1, w_{ij}+w_{jk}\geq w_{ik}\;\forall i,j,k\in V$. The goal of the problem is to find a permutation $\sigma$ that minimizes $c_G(\sigma)=\sum_{i,j: \sigma(i)<\sigma(j)} w_{ji}$. The weighted feedback arc set problem with weights obeying the probability constraints is a generalization of the rank aggregation problem: we set $w_{ij}=\frac{1}{m}\sum_{k=1}^m\mathbf{1}\{\pi_k(i)<\pi_k(j)\}$, and it is easy to see that this choice of weights obeys the probability constraints.

We will write for $i,j\in[n]$ that $i\succ j$ if a majority of input permutations rank $i$ higher than $j$ and denote $P_i=\{j: i\succ j\}$.\footnote{If $m$ is even, it might happen that exactly half of the permutations rank $i$ above $j$ and half rank $j$ above $i$. In this case, we assume an arbitrary tie-breaking rule for $i \succ j$ or $j\succ i$.}    The Copeland score of element $i$ is defined as $|P_i|$ and sorting by Copeland score means in non-increasing order, with arbitrary tie-breaking. We will refer to the majority tournament $T$ as the tournament on $n$ vertices, where arc $(i,j)$ is directed from $j$ to $i$ if $j\in P_i$.

We now define informally, and then formally, the four Markov chain-based algorithms.  Each Markov chain is on the set of elements $[n]$ and is specified by giving an $n\times n$ transition matrix $P=(p_{ij})$; $p_{ij}$ is the probability of transitioning from state $i$ to state $j$.  We compute an ordering from $P$ by finding $x$ such that $xP=x$,
and then returning a permutation $\pi$, that fulfills $x_{\pi^{-1}(0)}\geq\ldots\geq x_{\pi^{-1}(n-1)}$.
Ties are broken arbitrarily.  Given the graph of the transition probabilities with arcs $A=\{(i,j): p_{ij} \neq 0\}$, Dwork et al.\ note that the set of strongly connected components form a directed acyclic graph (DAG).  If there is more than one strongly connected component, Dwork et al.\ state that one should compute the ranking from the Markov chain on a strongly connected component corresponding to a source node of the DAG, put these nodes first in the ordering, and recurse on the remaining strongly connected components of the DAG.  In what follows, we will assume that there is a single strongly connected component, and our examples will also have this form.

For MC1, if the Markov chain is in state $i$, then the next state is chosen uniformly from the multiset of all elements that some input ranking ranked at least as high as $i$ (that is, from $\bigcup_k \{j: \pi_k(j) \leq \pi_k(i)\}$.)  Its transition matrix $P_{MC1}$ is defined by:
\[
p_{ij}=\frac{y_{ij}}{\sum_{k=1}^m x_{ik}},
\]
where $y_{ij}:=|\{k:\pi_k(j)\leq \pi_k(i)\}|$ denotes the number of permutations in which $j$ is ranked above or equal to $i$ and $x_{ik}:=|\{r:\pi_k(r)\leq \pi_k(i)\}|$ is the number of elements ranked above or equal to $i$ in permutation $\pi_k$.

For MC2, if the Markov chain is in state $i$, then the next state is chosen by picking one of the $m$ input rankings uniformly at random -- suppose it is $\pi_k$ -- then picking an element $j$ uniformly from the set of elements that $\pi_k$ ranks at least as high as $i$.  Its transition matrix $P_{MC2}$ is given by:
\[p_{ij}=\sum_{k=1}^m \frac{\mathbf{1}_{\{\pi_k(j)\leq \pi_k(i)\}}}{x_{ik}m}.
\]

For MC3, if the Markov chain is in state $i$, then the next state is chosen by picking one of the $m$ input rankings uniformly at random -- again suppose it is $\pi_k$ -- then picking an element $j \in [n]$ uniformly at random.  If $\pi_k(j) < \pi_k(i)$, the chain transitions to state $j$, otherwise it stays in $i$.  The transition matrix $P_{MC3}$ is defined by:
\[p_{ij}=\frac{z_{ij}}{mn},\]
where $z_{ij}=y_{ij}$ if $i\neq j$ and $z_{ii}=\sum_{k=1}^m (n-x_{ik}+1)$.

For MC4, if the Markov chain is in state $i$, then the next state is chosen by picking $j \in [n]$ uniformly at random.  If $j \succ i$ (that is, a majority of the input rankings rank $j$ higher than $i$), the chain moves to state $j$, otherwise it stays in $i$.
The transition matrix $P_{MC4}$ is given by
\[p_{ij}=\begin{cases}
\frac{1}{n} \text{ if } i\in P_j\\
0 \text{ if } i\neq j,i\not\in P_j\\
1-\sum_{r:r\neq i}p_{ir}\text{ if } i=j.
\end{cases}
\]

\section{Preliminaries}\label{trianglesection}
We begin by proving a simple lemma to characterize the ordering returned by MC4.

\begin{lemma}\label{mc4charlemma}
If $xP_{MC4}=x$, then $x_j=\frac{\sum_{i\in P_j}x_i}{n-|P_j|-1}$
\end{lemma}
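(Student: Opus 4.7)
The plan is to simply unroll the stationarity condition $x_j = \sum_i x_i p_{ij}$ using the explicit form of $P_{MC4}$ and solve the resulting linear equation for $x_j$. This is a one-equation computation, so the ``obstacle'' is really just bookkeeping: getting the self-loop probability $p_{jj}$ right, which in turn requires counting how many elements $r \neq j$ satisfy $r \succ j$.

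First I would observe that $p_{ij} \neq 0$ only when $i = j$ or $i \in P_j$; in the latter case $p_{ij} = 1/n$. Thus the stationarity equation reduces to
\[
x_j \;=\; p_{jj}\,x_j \;+\; \frac{1}{n}\sum_{i \in P_j} x_i.
\]
Next I would compute $p_{jj}$. From state $j$ the chain moves to some $r \neq j$ exactly when $r \succ j$, which by definition of $P_j$ means $r \in [n] \setminus (\{j\} \cup P_j)$; there are $n - 1 - |P_j|$ such elements, each receiving probability $1/n$. Hence
\[
p_{jj} \;=\; 1 - \frac{n - 1 - |P_j|}{n} \;=\; \frac{|P_j| + 1}{n}.
\]

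Substituting this back gives
\[
x_j \;=\; \frac{|P_j|+1}{n}\, x_j \;+\; \frac{1}{n}\sum_{i \in P_j} x_i,
\]
and multiplying through by $n$ and collecting the $x_j$ terms yields $x_j(n - |P_j| - 1) = \sum_{i \in P_j} x_i$, which is the claimed formula. The only subtlety is that the denominator $n - |P_j| - 1$ had better be nonzero; this is exactly the assumption from the preceding discussion that the transition graph has a single strongly connected component, since $|P_j| = n-1$ would make $j$ a ``sink'' that is not reachable from any state other than itself.
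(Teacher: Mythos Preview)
Your proof is correct and follows essentially the same route as the paper's: both expand the stationarity equation $x_j=\sum_i x_i p_{ij}$, compute the self-loop probability $p_{jj}=\frac{|P_j|+1}{n}$ by counting the outgoing transitions from $j$, and solve for $x_j$. Your added remark about the denominator $n-|P_j|-1$ being nonzero under the strong-connectivity assumption is a nice touch not made explicit in the paper.
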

\begin{proof}
\[
x_j=\sum_i x_i p_{ij}=\sum_{i\in P_j}\frac{x_i}{n}+\big(1-\sum_{r:r\neq j}p_{jr}\big)x_j
=\sum_{i\in P_j}\frac{x_i}{n}+x_j\frac{|P_j|+1}{n},\]
implying the result. The third equality uses that
\[
\sum_{r:r\neq j}p_{jr}=\sum_{r:j\in P_r}\frac{1}{n}=\frac{1}{n}|\{r:j\in P_r\}|=\frac{1}{n}((n-1)-|P_j|)=1-\frac{|P_j|+1}{n}.
\]
\end{proof}

The following is a simple corollary of the lemma, which we will later use:

\begin{corollary}\label{strictlyabove}
If $P_i\supset P_j$, then MC4 ranks $i$ above $j$.
\end{corollary}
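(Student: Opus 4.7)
The plan is to use Lemma \ref{mc4charlemma} to compare the stationary masses $x_i$ and $x_j$ of MC4 directly, aiming to show $x_i > x_j$; by the definition of the output rule, this places $i$ strictly above $j$ in the returned ranking.

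First I would translate $P_j \subsetneq P_i$ into a useful statement about $j$ itself. The majority tournament orients the pair $\{i,j\}$ one way or the other, so either $i \in P_j$ or $j \in P_i$. The case $i \in P_j$ combined with $P_j \subset P_i$ would yield $i \in P_i$, which never holds, so $j \in P_i$; and since $j \notin P_j$, I conclude that $j \in P_i \setminus P_j$, so $j$ itself witnesses the strict inclusion.

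Next I would apply Lemma \ref{mc4charlemma} to both $i$ and $j$:
\[ x_i(n-|P_i|-1) = \sum_{k \in P_i} x_k \quad\text{and}\quad x_j(n-|P_j|-1) = \sum_{k \in P_j} x_k. \]
Splitting the first sum along $P_i = P_j \cup (P_i \setminus P_j)$ and retaining only the $k=j$ term from the latter piece (the other $x_k$ being non-negative), I would obtain
\[ \sum_{k \in P_i} x_k \;\geq\; \sum_{k \in P_j} x_k + x_j \;=\; x_j(n-|P_j|-1) + x_j \;=\; x_j(n-|P_j|). \]
Because $|P_i| \geq |P_j|+1$, the denominators satisfy $n-|P_i|-1 \leq n-|P_j|-2 < n-|P_j|$, so dividing yields $x_i \geq x_j(n-|P_j|)/(n-|P_i|-1) > x_j$, as desired.

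I do not foresee a real obstacle; the only point worth flagging is that the denominator $n-|P_i|-1$ in the formula must be positive. This is guaranteed by the paper's standing assumption that the chain has a single strongly connected component --- if some element beat every other, it would constitute its own source component in the DAG of strong components and would be placed first by the recursive procedure anyway, so the corollary would hold trivially in that degenerate case.
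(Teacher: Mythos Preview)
Your proof is correct and is precisely the direct argument the paper has in mind when it calls this ``a simple corollary'' of Lemma~\ref{mc4charlemma}; the paper gives no explicit proof of its own. The one small point you leave implicit is that the final strict inequality $x_j\cdot\frac{n-|P_j|}{n-|P_i|-1}>x_j$ also uses $x_j>0$, which holds because the chain is assumed irreducible and hence has a strictly positive stationary distribution --- the same standing assumption you already invoke for the denominator.
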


We will now see that on an input of just three elements, both MC4 and sorting by Copeland's score can be off by a factor of 2.

We characterize the possible inputs consisting of the 6 possible permutations of 3 elements, such that the majority tournament is a directed triangle. The Markov chain is then completely symmetric, implying that $x_0=x_1=x_2$. Thus, we force MC4 to order arbitrarily. Similarly, as $|P_0|=|P_1|=|P_2|=1$ Copeland has to sort arbitrarily as well. We will see that in this setting, two permutations can be as much as a factor of 2 away from each other. Consider

\[
\pi_1=\left (0,1,2\right ),
\;
\pi_2=\left (0,2,1\right ),
\pi_3=\left (1,2,0\right ),
\;
\pi_4=\left (1,0,2\right ),
\;
\pi_5=\left (2,0,1\right ),
\;
\pi_6=\left (2,1,0\right )
\;
\]

\begin{center}
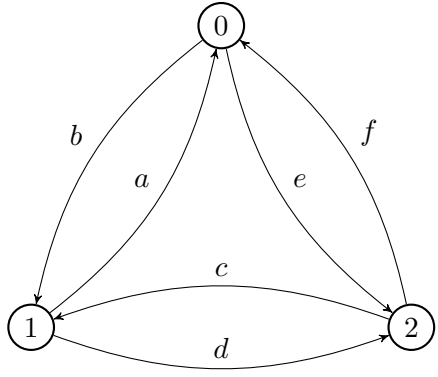
\begin{figure}
\begin{center}
\begin{tikzpicture}
[auto, ->,>=stealth',node distance=2cm and 2cm,vertex/.style={circle,draw=black,thick,inner sep=0pt,minimum size=6mm}]
\node[vertex] (0) at (0,0) {0};
\node[vertex] (1) at (-2.5, -4) {1};
\node[vertex] (2) at (2.5,-4) {2};
\path (0) edge[bend right=20] node[swap] {$b$} (1)
          edge[bend right=20] node {$e$} (2)
      (1) edge[bend right=20] node {$a$} (0)
          edge[bend right=20] node {$d$} (2)
      (2) edge[bend right=20] node[swap] {$f$} (0)
          edge[bend right=20] node[swap] {$c$} (1);
\end{tikzpicture}
\caption{Directed triangle}
\label{fig:triangle}
\end{center}
\end{figure}
\end{center}
The directed graph in Figure \ref{fig:triangle} yields a directed triangle as its majority tournament if either $a\geq b,c\geq d$ and $e\geq f$ or $a\leq b,c\leq d$ and $e\leq f$. We may assume without loss of generality that the first set of inequalities holds. If we let the ranking $\pi_i$ appear $\alpha_i$ times in the input, we can write
\begin{eqnarray*}
a=\alpha_3+\alpha_4+\alpha_6,\;\;\;c=\alpha_2+\alpha_5+\alpha_6,\;\;\;e=\alpha_1+\alpha_2+\alpha_4\\
b=\alpha_1+\alpha_2+\alpha_5, \;\;\;
d=\alpha_1+\alpha_3+\alpha_4,\;\;\;
f=\alpha_3+\alpha_5+\alpha_6.
\end{eqnarray*}

Substituting these into the previous inequalities, we find
\begin{eqnarray*}
\alpha_3+\alpha_4+\alpha_6\geq\alpha_1+\alpha_2+\alpha_5,\;\;
\alpha_2+\alpha_5+\alpha_6\geq
\alpha_1+\alpha_3+\alpha_4\\\text{ and }
\alpha_1+\alpha_2+\alpha_4\geq
\alpha_3+\alpha_5+\alpha_6.
\label{ineqs}
\end{eqnarray*}

Moreover, recalling that $c_R(\pi_i)$ denotes the cost of $\pi_i$ as output on the given input permutations, we find that
\begin{eqnarray*}
c_R(\pi_1)=\alpha_2+2\alpha_3+\alpha_4+2\alpha_5+3\alpha_6,\;\;\;
c_R(\pi_6)=3\alpha_1+2\alpha_2+\alpha_3+2\alpha_4+\alpha_5\\
c_R(\pi_2)=\alpha_1+3\alpha_3+2\alpha_4+\alpha_5+2\alpha_6,\;\;\;
c_R(\pi_3)=2\alpha_1+3\alpha_2+\alpha_4+2\alpha_5+\alpha_6,\\
c_R(\pi_4)=\alpha_1+2\alpha_2+\alpha_3+3\alpha_5+2\alpha_6,\;\;\;
c_R(\pi_5)=2\alpha_1+\alpha_2+2\alpha_3+3\alpha_4+\alpha_6
\end{eqnarray*}
Now, choosing e.g. $\alpha_2=1,\alpha_4=k,\alpha_6=k,\alpha_1=\alpha_3=\alpha_5=0$, we find that the inequalities are fulfilled - so there is a directed triangle and thus MC4 might output any permutation - and as $k\to\infty$, $\frac{c_R(\pi_1)}{c_R(\pi_6)}\to 2$. On the other hand, manipulating the inequalities above, we find that the ratio of the cost of any two permutations is at most 2, e.g.:
\[\frac{c_R(\pi_1)}{c_R(\pi_6)}=\frac{\alpha_2+2\alpha_3+\alpha_4+2\alpha_5+3\alpha_6}{3\alpha_1+2\alpha_2+\alpha_3+2\alpha_4+\alpha_5}\leq
\frac{(\alpha_2+\alpha_4)+\alpha_6+2(\alpha_3+\alpha_5+\alpha_6)}{2(\alpha_1+\alpha_2+\alpha_4)}
\leq\frac{1+1+2}{2}.\]

\begin{lemma}\label{newcopelandlowerbound}
Copeland's method, MC4 or any other algorithm whose output depends solely on the majority tournament, can be off by a factor of 2 if the input consists of permutations on 3 elements.
\end{lemma}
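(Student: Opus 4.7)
The plan is to package the construction already laid out in the paragraphs preceding the lemma into a single clean argument. First I would observe that on any three-element input whose majority tournament is a directed triangle, every element has Copeland score $1$, and the $3\times 3$ matrix $P_{MC4}$ is cyclically symmetric; by Lemma \ref{mc4charlemma} its stationary distribution therefore satisfies $x_0 = x_1 = x_2$. Consequently MC4, Copeland's method, and any algorithm whose output depends solely on the majority tournament must break a three-way tie and could legitimately return any of the six permutations of $[3]$.

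Next I would fix the specific multiplicities $\alpha_2 = 1$, $\alpha_4 = \alpha_6 = k$, $\alpha_1 = \alpha_3 = \alpha_5 = 0$, and check by direct substitution that the three inequalities derived above are satisfied, so the majority tournament is indeed the desired directed triangle. Plugging into the cost formulas tabulated above yields $c_R(\pi_1) = 4k + 1$ and $c_R(\pi_6) = 2k + 2$, so $c_R(\pi_1)/c_R(\pi_6) \to 2$ as $k \to \infty$. Since the optimum on this input is at most $c_R(\pi_6) = 2k+2$ while a tournament-only algorithm could legitimately output $\pi_1$ with cost $4k+1$, no approximation ratio strictly below $2$ is achievable, which is what the lemma asserts.

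There is really no hard step here: every ingredient has already been computed in the text preceding the lemma, so the work is just assembly. The only point that requires mild care is justifying that MC4 and Copeland are genuinely forced into an arbitrary tie-break on a symmetric triangle, and this is immediate from Lemma \ref{mc4charlemma} together with the observation that $|P_0| = |P_1| = |P_2| = 1$.
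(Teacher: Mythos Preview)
Your proposal is correct and is exactly the argument the paper gives in the paragraphs preceding the lemma: the same symmetric-triangle observation, the same choice $\alpha_2=1,\ \alpha_4=\alpha_6=k$, and the same cost comparison $c_R(\pi_1)/c_R(\pi_6)\to 2$. The only thing you omit is the paper's accompanying remark that on a three-element triangle the ratio between any two permutations is also at most~$2$, but that upper bound is not part of the lemma's assertion.
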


\section{MC1, MC2 and MC3}\label{mcsection}
We will show in this section that MC1, MC2 and MC3 all return a cost linear in $k$ on an input consisting of one copy of $\pi_5$ and $k-1$ copies of $\pi_1$ --- with $\pi_1,\pi_5$ defined as in the previous section.
%

The respective transition matrices will then be given by

\[\centering
P_{MC1}=
\begin{bmatrix}
\frac{k}{k+1} & 0 & \frac{1}{k+1}\\
\frac{k}{2k+1} & \frac{k}{2k+1} & \frac{1}{2k+1}\\
\frac{k-1}{3k-2} & \frac{k-1}{3k-2} & \frac{k}{3k-2}\\
\end{bmatrix}
\]
\[
P_{MC2}=
\begin{bmatrix}
\frac{k-1}{k}+\frac{1}{2k} & 0 & \frac{1}{2k}\\
\frac{k-1}{2k}+\frac{1}{3k} & \frac{k-1}{2k}+\frac{1}{3k} & \frac{1}{3k}\\
\frac{k-1}{3k} & \frac{k-1}{3k} & \frac{k+2}{3k}\\
\end{bmatrix}
\]
\[
P_{MC3}=
\begin{bmatrix}
\frac{3k-1}{3k} & 0 & \frac{1}{3k}\\
\frac{1}{3} & \frac{2k-1}{3k} & \frac{1}{3k}\\
\frac{k-1}{3k} & \frac{k-1}{3k} & \frac{k+2}{3k}\\
\end{bmatrix}
\]

We now show that the stationary distribution for all three of these Markov processes has $x_2>x_1$.

\begin{lemma}
In the stationary distribution of $P_{MC1}$, $x_2>x_1$.
\end{lemma}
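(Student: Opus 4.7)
The plan is to exploit a feature of $P_{MC1}$ that makes the comparison between $x_1$ and $x_2$ essentially a one-variable calculation: the $(1,0)$ entry in the stationary equation $xP_{MC1} = x$ for coordinate $x_1$ has a zero in the column corresponding to $x_0$. Concretely, reading off the equation $\sum_i x_i p_{i1} = x_1$ from the second column of $P_{MC1}$, I get
\[
x_1 \;=\; \frac{k}{2k+1}\, x_1 \;+\; \frac{k-1}{3k-2}\, x_2,
\]
with no $x_0$ term appearing. So I can solve directly for the ratio $x_1/x_2$ without ever touching $x_0$.

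Rearranging the displayed equation gives $x_1 \cdot \frac{k+1}{2k+1} = \frac{k-1}{3k-2}\, x_2$, hence
\[
x_1 \;=\; \frac{(k-1)(2k+1)}{(k+1)(3k-2)}\, x_2.
\]
To conclude $x_2 > x_1$, it suffices to check that the coefficient is strictly less than $1$, i.e.\ that $(k-1)(2k+1) < (k+1)(3k-2)$. Expanding both sides reduces this to $k^2 + 2k - 1 > 0$, which is clear for all $k \ge 1$ (the relevant regime, since the input contains $k-1$ copies of $\pi_1$). Note that $x_2$ is strictly positive because the chain is irreducible on $\{0,1,2\}$ (one verifies from the matrix that every state communicates with every other), so dividing by $x_2$ is legitimate.

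There is no real obstacle here: the only step requiring thought is noticing that the $x_1$-equation decouples from $x_0$, after which everything is elementary arithmetic on the factor $(k-1)(2k+1)/((k+1)(3k-2))$. The same structural trick — that a zero in the transition matrix isolates a two-variable comparison — will presumably drive the analogous lemmas for $P_{MC2}$ and $P_{MC3}$ that follow.
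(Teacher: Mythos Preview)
Your proof is correct and follows essentially the same approach as the paper: both read off the stationary equation for $x_1$ from the second column of $P_{MC1}$, observe that $x_0$ does not appear, and rearrange to compare $x_1$ with $x_2$ directly. The paper simply asserts $x_1 < x_2$ for $k \geq 2$ after obtaining $\frac{k+1}{2k+1}x_1 = \frac{k-1}{3k-2}x_2$, whereas you carry out the arithmetic explicitly; note that your irreducibility remark needs $k \geq 2$ (for $k=1$ state $2$ is absorbing), which is the regime the paper works in anyway.
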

\begin{proof}
$x_1=\frac{k}{2k+1}x_1+\left(\frac{k-1}{3k-2}\right)x_2\implies \frac{k+1}{2k+1}x_1=\frac{k-1}{3k-2}x_2\implies x_1<x_2$ for $k\geq2$.
\end{proof}

\begin{lemma}
In the stationary distribution of $P_{MC2}$, $x_2>x_1$.
\end{lemma}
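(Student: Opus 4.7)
The plan is to mimic exactly the proof of the previous lemma for $P_{MC1}$: write down the stationary equation $x_1 = \sum_i x_i p_{i1}$ from the second column of $P_{MC2}$, solve for the ratio $x_1/x_2$, and check that it is strictly less than $1$ for $k \geq 2$.

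Concretely, the second column of $P_{MC2}$ has a $0$ in the top entry, so the equation reads
\[
x_1 = \left(\frac{k-1}{2k}+\frac{1}{3k}\right) x_1 + \frac{k-1}{3k}\, x_2.
\]
Moving the $x_1$ term to the left-hand side gives a coefficient of $1 - \frac{k-1}{2k} - \frac{1}{3k}$ on $x_1$, which I would simplify to $\frac{3k+1}{6k}$ by putting everything over a common denominator of $6k$. This yields
\[
\frac{3k+1}{6k}\, x_1 = \frac{k-1}{3k}\, x_2,
\]
so $x_1 = \frac{2(k-1)}{3k+1}\, x_2$.

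The conclusion $x_1 < x_2$ then follows from the elementary inequality $2(k-1) < 3k+1$, which holds for every $k \geq 1$ and in particular for the relevant regime $k \geq 2$. The only thing that could go wrong is an arithmetic slip in combining $\frac{k-1}{2k}+\frac{1}{3k}$, so the one step I would double-check is the common-denominator computation $1 - \frac{k-1}{2k} - \frac{1}{3k} = \frac{6k-3(k-1)-2}{6k} = \frac{3k+1}{6k}$. After that, no obstacles remain; the proof is as short as the $P_{MC1}$ case above.
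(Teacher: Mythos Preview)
Your proof is correct and follows essentially the same approach as the paper: write the stationary equation for $x_1$ from the second column of $P_{MC2}$, move the $x_1$ term to the left, and compare the resulting coefficients. The paper merely presents the same identity in the equivalent form $x_1\bigl(\tfrac{1}{2}+\tfrac{1}{6k}\bigr)=x_2\bigl(\tfrac{1}{3}-\tfrac{1}{3k}\bigr)$, which is your $\tfrac{3k+1}{6k}\,x_1=\tfrac{2(k-1)}{6k}\,x_2$ rewritten.
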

\begin{proof}
$x_1=\frac{3k-1}{6k}x_1+\frac{2k-2}{6k}x_2\implies x_1\left(\frac{1}{2}+\frac{1}{6k}\right)=x_2\left(\frac{1}{3}-\frac{1}{3k}\right)\implies x_1<x_2$ for $k\geq2$.
\end{proof}

\begin{lemma}
In the stationary distribution of $P_{MC3}$, $x_2>x_1$.
\end{lemma}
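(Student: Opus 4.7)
The proof will follow the exact pattern of the preceding two lemmas. The plan is to read off the equation for $x_1$ directly from the stationarity condition $x P_{MC3} = x$, notice that column $1$ of $P_{MC3}$ has a zero entry in the top row, and then solve for the ratio $x_1/x_2$.

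Concretely, the first column of $P_{MC3}$ (interpreted as transitions into state $1$) is $\bigl(0, \tfrac{2k-1}{3k}, \tfrac{k-1}{3k}\bigr)^\top$, so the stationarity relation gives
\[
x_1 \;=\; 0\cdot x_0 + \tfrac{2k-1}{3k}\, x_1 + \tfrac{k-1}{3k}\, x_2.
\]
Moving the $x_1$ term over yields $x_1\bigl(1 - \tfrac{2k-1}{3k}\bigr) = \tfrac{k-1}{3k}\, x_2$, i.e.\ $x_1\cdot\tfrac{k+1}{3k} = \tfrac{k-1}{3k}\, x_2$, hence $x_1 = \tfrac{k-1}{k+1}\, x_2$. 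Since $\tfrac{k-1}{k+1} < 1$ for $k \geq 2$ and $x_2 > 0$ (the chain is irreducible, so the stationary distribution is strictly positive), this gives $x_1 < x_2$, as required.

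There is really no obstacle here: because the zero entry $P_{01} = 0$ in column $1$ kills the $x_0$ contribution, the stationarity equation for $x_1$ already compares $x_1$ and $x_2$ directly, and no auxiliary computation of $x_0$ is needed. The entire proof is two lines of arithmetic, mirroring the MC1 and MC2 cases.
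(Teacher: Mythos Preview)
Your proof is correct and is essentially identical to the paper's: both read off the stationarity equation for $x_1$ from column~1 of $P_{MC3}$, use $p_{01}=0$ to eliminate $x_0$, and simplify to $(k+1)x_1=(k-1)x_2$. Your added remark that irreducibility forces $x_2>0$ is a reasonable justification that the paper leaves implicit.
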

\begin{proof}
$x_1=\left(\frac{2k-1}{3k}\right)x_1+\frac{k-1}{3k}x_2\implies (k+1)x_1=(k-1)x_2\implies x_1<x_2$.
\end{proof}

\begin{theorem}
The approximation guarantee of MC1, MC2 and MC3 cannot be bounded in the number of permuted elements.
\end{theorem}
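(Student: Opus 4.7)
The plan is to reuse the instance set up just before the theorem, namely the input consisting of $k-1$ copies of $\pi_1=(0,1,2)$ together with one copy of $\pi_5=(2,0,1)$, and to show that its MC1/MC2/MC3-ranking is a factor $\Omega(k)$ away from the optimum. Since this instance uses only three elements, the resulting unbounded ratio will in particular not be bounded in terms of $n$.

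First, I would pin down the optimum. Among the six permutations on $\{0,1,2\}$, outputting $\pi_1$ itself achieves cost $(k-1)\cdot K(\pi_1,\pi_1)+K(\pi_1,\pi_5)=2$ (the two disagreeing pairs between $\pi_1$ and $\pi_5$ are $(0,2)$ and $(1,2)$), and one easily checks that no other permutation is cheaper. Hence $\mathrm{OPT}\le 2$.

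Next I would use the three preceding lemmas. Each of them shows that for the matrices $P_{MC1},P_{MC2},P_{MC3}$ associated with this very input, the stationary distribution satisfies $x_2>x_1$ for $k\ge 2$. By the definition of the algorithms, the returned permutation $\sigma$ must therefore place $2$ ahead of $1$, i.e.\ $\sigma(2)<\sigma(1)$. Consequently the pair $\{1,2\}$ is inverted in $\sigma$ relative to $\pi_1$, so each of the $k-1$ copies of $\pi_1$ contributes at least one to the Kendall distance from $\sigma$. This gives
\[
c_R(\sigma)\ \ge\ k-1
\]
for the output $\sigma$ of any of MC1, MC2, MC3, and therefore
\[
\frac{c_R(\sigma)}{\mathrm{OPT}}\ \ge\ \frac{k-1}{2}\xrightarrow[k\to\infty]{}\infty,
\]
which proves the theorem.

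There is no real obstacle beyond combining the three lemmas with the cost calculation: the stationary-distribution analysis, which would normally be the hard step, has already been carried out in the lemmas immediately above the theorem. The only thing to be careful about is making the conclusion explicit for all three chains simultaneously and observing that the argument works even though $n=3$ is fixed, so that the lower bound indeed cannot be absorbed into any function of $n$.
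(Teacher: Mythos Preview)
Your proposal is correct and follows essentially the same approach as the paper: use the three preceding lemmas to conclude that each chain ranks $2$ above $1$, observe that any such output has cost at least $k-1$ from the $k-1$ copies of $\pi_1$, and compare to the cost $2$ of outputting $\pi_1$. Your write-up is in fact slightly more careful than the paper's one-line proof, which asserts that the cost of any permutation ordering $2$ before $1$ ``is $(k-1)$'' rather than ``is at least $k-1$''.
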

\begin{proof}
The cost of any permutation that orders 2 before 1 is $(k-1)$, while the cost of  $\pi_1=(0,1,2)$ is $2$.
\end{proof}
\section{MC4}\label{mc4section}
In this section, we prove that MC4 has only a linear approximation guarantee even when the majority tournament is strongly connected.

Consider for $n\in\mathbb{N},c\approx\frac{n}{4}$ the following rankings:
\begin{align*}
\pi_1 & =\left (0,1,2,\ldots,n-1\right),\\
\pi_2 & =\left (1,2,\ldots,n-1,0\right ),\\
\pi_3 & =\left (n-c,n-c+1,\ldots,n-1,0,1,\ldots,n-c-1\right )
\end{align*}

The Markov chain created by MC4 on an input consisting of $n$ copies of $\pi_1$, $n$ copies $\pi_2$ and one copy of $\pi_3$ is shown in Figure \ref{fig:mc4}. We claim that the optimum has cost $O((n+c)n)=O(n^2)$, while MC4 returns a solution with cost $\Omega(n^3)$. Consider first $c_R(\pi_1)$:
\[
c_R(\pi_1)=n(K(\pi_1,\pi_1)+K(\pi_1,\pi_2))+K(\pi_1,\pi_3)=n(0+(n-1))+c(n-c)\in O(n^2).
\]
On the other hand, we claim that MC4 returns a solution, 
in which the number of elements ranked below $n-1$ is linear in $n$, as will be shown in Lemma \ref{positionlemma}.
\begin{figure}
\begin{center}
\begin{tikzpicture}
[auto, ->,>=stealth',node distance=1cm and 1cm,vertex/.style={circle,draw=black,thick,inner sep=1pt,minimum size=6mm}]
\node[vertex] (0) {0};
\node[vertex,below=of 0] (1) {1};
\node[vertex,below=of 1] (2) {2};
\node[below=.5 of 2] (d1) {$\vdots$};
\node[vertex,below=.5 of d1] (nc1) {$n-c-1$};
\node[vertex,below=of nc1] (nc) {$n-c$};
\node[below=.5 of nc] (d2) {$\vdots$};
\node[vertex,below=.5 of d2] (n1) {$n-1$};
\path (1) edge (0)
      (2) edge (1)
          edge[bend right] (0)
      (nc1) edge[bend right=10] (2)
            edge[bend right] (1)
            edge[bend right=45] (0)
      (nc) edge (nc1)
           edge[bend left=35] (2)
           edge[bend left=45] (1)
      (n1) edge[bend right=10] (nc)
           edge[bend right=20] (nc1)
           edge[bend right] (2)
           edge[bend right=45] (1)
      (0)  edge[bend right=55] (nc)
           edge[bend right=60] (n1);
\end{tikzpicture}
\end{center}
\caption{Markov chain created by MC4. Each edge has probability $\frac{1}{n}$ and with the remaining probability, there is a self-loop.}
\label{fig:mc4}
\end{figure}
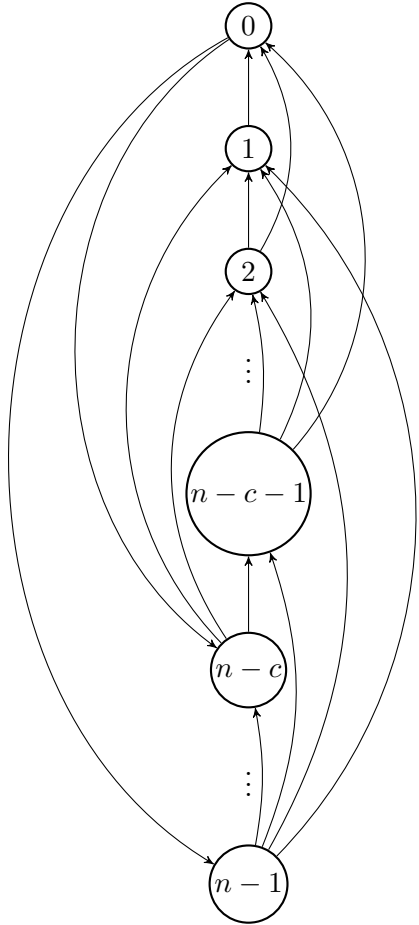
\begin{lemma}\label{positionlemma}
For the stationary vector $x$ of $P_{MC4}$: $x_{n-1}>x_k$ if $\sqrt{cn}+1<k<n-c$.
\end{lemma}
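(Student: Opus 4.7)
The plan is to invoke Lemma \ref{mc4charlemma} to compute closed-form expressions for the stationary probabilities $x_k$ in terms of a single parameter and then compare $x_{n-1}$ with $x_k$ directly. The first step will be to read off from Figure \ref{fig:mc4} the structure of the sets $P_j$, which fall into three regimes: $P_0 = \{1,\dots,n-c-1\}$; $P_k = \{k+1,\dots,n-1\}$ for $1 \leq k \leq n-c-1$; and $P_k = \{k+1,\dots,n-1\}\cup\{0\}$ for $n-c \leq k \leq n-1$. Lemma \ref{mc4charlemma} will then produce a recurrence in each of the two nontrivial regimes, plus two ``glue'' equations (at $j = 0$ and across the boundary $k = n-c$) that together pin down the distribution up to normalization.

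In the low regime $1 \leq k \leq n-c-1$, Lemma \ref{mc4charlemma} reads $k\, x_k = \sum_{i=k+1}^{n-1} x_i$; writing this for $k$ and $k+1$ and subtracting eliminates the tail sum and yields the recurrence $(k+2)\, x_{k+1} = k\, x_k$, which telescopes to $x_k = \tfrac{2 x_1}{k(k+1)}$. In the high regime $n-c \leq k \leq n-1$, the analogous equation $(k-1)\, x_k = x_0 + \sum_{i=k+1}^{n-1} x_i$ gives by the same differencing $(k+1)\, x_{k+1} = (k-1)\, x_k$, hence $x_k = \tfrac{(n-c-1)(n-c)}{(k-1)k}\, x_{n-c}$; in particular $x_{n-1} = x_0/(n-2)$, which one also reads off from $|P_{n-1}| = 1$ directly. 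To couple the two regimes I would apply Lemma \ref{mc4charlemma} at $j = 0$, which reads $c\, x_0 = \sum_{k=1}^{n-c-1} x_k$; plugging in the low-regime closed form and using the partial fraction $\tfrac{1}{k(k+1)} = \tfrac{1}{k} - \tfrac{1}{k+1}$ telescopes the sum to $2 x_1 (n-c-1)/(n-c)$, so $x_0 = \tfrac{2 x_1(n-c-1)}{c(n-c)}$ and therefore $x_{n-1} = \tfrac{2 x_1 (n-c-1)}{(n-2) c (n-c)}$.

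For $1 \leq k \leq n-c-1$ the ratio is then
\[
\frac{x_{n-1}}{x_k} \;=\; \frac{k(k+1)(n-c-1)}{(n-2)\, c\, (n-c)},
\]
so $x_{n-1} > x_k$ reduces to $k(k+1)(n-c-1) > (n-2)\, c\, (n-c)$. Since $c \leq n/2$ (as $c \approx n/4$), the one-line inequality $(n-2)(n-c) \leq n(n-c-1)$ (equivalent to $2c \leq n$) reduces the task to $k(k+1) > cn$, and the hypothesis $k > \sqrt{cn}+1$ yields $(k-1)^2 > cn$, hence $k(k+1) > (k-1)^2 > cn$. The main obstacle I anticipate is simply keeping the two coupled recurrences and the boundary condition at $k = n-c$ algebraically straight; once the closed forms for $x_k$ are in hand, the final comparison is essentially a single algebraic line.
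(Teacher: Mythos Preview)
Your argument is correct and in fact somewhat sharper than the paper's; the two proofs differ in style. The paper does not solve the stationary distribution in closed form. Instead, it records the same three balance equations you start from, bounds $x_{n-1} > \tfrac{1}{cn}\sum_{i=1}^{n-c-1} x_i$ using only $j-1<n$ in the high regime, and then proves the telescoping identity
\[
\sum_{i=1}^{n-c-1} x_i \;>\; k\sum_{i>k} x_i
\]
by repeatedly substituting $x_\ell = \tfrac{1}{\ell}\sum_{i>\ell} x_i$ for $\ell=1,2,\dots,k$. Combining these two estimates with $x_k = \tfrac{1}{k}\sum_{i>k} x_i$ gives $x_{n-1} > x_k$ once $k(k-1) > cn$. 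Your route instead differences the balance equations to get the recurrences $(k+2)x_{k+1}=kx_k$ and $(k+1)x_{k+1}=(k-1)x_k$, solves them exactly to $x_k = \tfrac{2x_1}{k(k+1)}$ and $x_{n-1}=\tfrac{x_0}{n-2}$, and compares the resulting closed forms. This buys you an exact expression for the ratio $x_{n-1}/x_k$ and a clean reduction to $k(k+1)>cn$ (after the one-line estimate $(n-2)(n-c)\le n(n-c-1)$ coming from $2c\le n$), whereas the paper's argument avoids ever writing down $x_k$ explicitly but needs the slightly opaque iterated-substitution identity. One minor remark: the high-regime closed form $x_k = \tfrac{(n-c-1)(n-c)}{(k-1)k}x_{n-c}$ that you derive is correct but never actually used in your final comparison; everything you need follows from the low-regime formula together with $x_{n-1}=x_0/(n-2)$ and the equation $cx_0=\sum_{k=1}^{n-c-1}x_k$.
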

\begin{proof}
Consider the eigenvector $x$, such that $xP_{MC4}=x$. Notice that for $k:1\leq k \leq n-c-1$:
\[
x_k=\sum_{i:k<i\leq n-1}\frac{x_i}{n}+\frac{n-k}{n}x_k\implies x_k=\frac{\sum_{i:k<i}x_i}{k}.
\]
Additionally, we find that
\[
x_0=\sum_{i:1\leq i\leq n-c-1}\frac{x_i}{n}+\left(\frac{n-c}{n}\right)x_0\implies x_0=\sum_{i:1\leq i \leq n-c-1}\frac{x_i}{c}.
\]
and for $j\geq n-c$:
\[
x_j=\frac{x_0}{n}+\sum_{i:i>j}\frac{x_i}{n}+x_j\frac{n-(j-1)}{n}\implies x_j=\frac{x_0+\sum_{i:i>j}x_i}{j-1}>\sum_{i:1\leq i \leq n-c-1}\frac{x_i}{cn}.
\]
Here, the last inequality holds due to $j-1<n-1$ and the characterization of $x_0$ above.
We now want to compare the terms in
\[
x_{n-1}>\sum_{i:1\leq i \leq n-c-1}\frac{x_i}{cn}
\text{ and }
x_{k}=\sum_{i:k<i}\frac{x_i}{k}
\]
To do so, we count how often the sum $\sum_{i:k<i}\frac{x_i}{k}$ appears in $\sum_{i:1\leq i\leq n-c-1}x_i$:
\begin{align*}
\sum_{i:1\leq i \leq n-c-1}x_i
& =\sum_{i:2\leq i \leq n-c-1}2x_i+\sum_{i:i > n-c-1}x_i\\
& =\sum_{i:3\leq i \leq n-c-1}3x_i+\sum_{i:i > n-c-1}2x_i\\
& \ldots \\
& =\sum_{i:k\leq i \leq n-c-1}k x_i+\sum_{i:i > n-c-1}(k-1)x_i\\
& =\sum_{i:k<i\leq n-c-1}(k+1)x_i+\sum_{i:i > n-c-1}k x_i>k \sum_{i:k< i}x_i.
\end{align*}
But then, $x_{n-1}>k\sum_{i:k<i}\frac{x_i}{cn}>\sum_{i:k<i}\frac{x_i}{k-1}>x_k$ for $\sqrt{cn}+1<k\leq n-c-1$.\end{proof}
It follows from Corollary \ref{strictlyabove} that the elements $n-c,\ldots,n-2$ will all be ranked above $n-1$. Thus, the distance from $\pi_{MC4}$ to $\pi_1$ is at least $\Omega(c(n-c))=\Omega(n^2)$ giving a cost of $\Omega(n^3)$, which yields the following.
\begin{theorem}
MC4 has no sublinear approximation guarantee.
\end{theorem}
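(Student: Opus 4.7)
The plan is to pair the $O(n^2)$ upper bound on $c_R(\pi_1)$ already computed above with an $\Omega(n^3)$ lower bound on $c_R(\sigma)$, where $\sigma$ denotes whatever permutation MC4 outputs. Because the instance contains $n$ copies of $\pi_1$, it suffices to prove $K(\sigma,\pi_1)=\Omega(n^2)$: summing over those copies then yields $c_R(\sigma)\ge n\cdot K(\sigma,\pi_1)=\Omega(n^3)$, and dividing by $c_R(\pi_1)=O(n^2)$ gives a ratio of $\Omega(n)$, which cannot be bounded by any sublinear function of $n$.

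To exhibit $\Omega(n^2)$ disagreements between $\sigma$ and $\pi_1$, I would use Lemma~\ref{positionlemma} and Corollary~\ref{strictlyabove} to pin $n-1$ into an unnaturally ``middle'' position in $\sigma$. Concretely, set $H=\{n-c,\ldots,n-2\}$ and $M=\{k:\sqrt{cn}+1<k\le n-c-1\}$. For each $j\in H$, a direct check of the majority tournament on this input gives $P_j=\{0,j+1,\ldots,n-1\}\supsetneq\{0\}=P_{n-1}$, so Corollary~\ref{strictlyabove} forces $\sigma$ to rank every $j\in H$ above $n-1$. Lemma~\ref{positionlemma} simultaneously forces $\sigma$ to rank $n-1$ above every $k\in M$. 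Chaining these two facts, $\sigma$ ranks every element of $H$ above every element of $M$, whereas $\pi_1$ orders each such pair in the opposite direction because $k<n-c\le j$. Every pair in $H\times M$ is therefore discordant, so $K(\sigma,\pi_1)\ge |H|\cdot|M|=(c-1)(n-c-\lfloor\sqrt{cn}\rfloor-1)$.

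For the chosen $c\approx n/4$ we have $\sqrt{cn}\approx n/2$ and $n-c\approx 3n/4$, so both $|H|$ and $|M|$ are linear in $n$; hence $K(\sigma,\pi_1)=\Theta(n^2)$ and the proof is complete. The real difficulty was absorbed into Lemma~\ref{positionlemma}, which is already established in the excerpt; the theorem itself reduces to the short block-counting argument above, once one identifies two blocks whose relative order is inverted between $\sigma$ and $\pi_1$. The only minor check is that $M$ is nonempty, which is immediate from $\sqrt{cn}<n-c-1$ for our choice of $c$.
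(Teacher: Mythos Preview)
Your argument is correct and is essentially identical to the paper's: both combine Lemma~\ref{positionlemma} (forcing $n-1$ above the block $M$) with Corollary~\ref{strictlyabove} (forcing the block $H=\{n-c,\ldots,n-2\}$ above $n-1$) to produce $\Theta(n^2)$ discordant pairs with $\pi_1$, and then multiply by the $n$ copies of $\pi_1$ to obtain the $\Omega(n^3)$ cost. Your write-up is actually more explicit than the paper's in naming the two blocks and verifying $P_j\supsetneq P_{n-1}$; the only cosmetic slip is that $|M|=n-c-\lfloor\sqrt{cn}\rfloor-2$ rather than $-1$, which of course does not affect the $\Theta(n^2)$ conclusion.
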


We will now show that if the majority tournament is strongly connected, then any solution is an $O(n)$-approximate solution.  We first need the following lemma.

\begin{lemma}
If the majority tournament is strongly connected, then OPT has cost $\Omega(mn)$.
\end{lemma}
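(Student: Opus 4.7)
My plan is to reduce the desired bound to a purely combinatorial statement about the weights $w_{ij}$ and then establish that statement using a Hamiltonian cycle in the majority tournament.

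First, I would observe that for any permutation $\sigma$, each pair $\{i,j\}$ contributes either $m w_{ij}$ or $m w_{ji}$ to $c_R(\sigma)$, and hence at least $m\min(w_{ij},w_{ji})$. Summing over all pairs yields
\[
c_R(\sigma) \;\geq\; m\sum_{\{i,j\}} \min(w_{ij},w_{ji}).
\]
It therefore suffices to show that $\sum_{\{i,j\}} \min(w_{ij},w_{ji}) = \Omega(n)$ whenever the majority tournament is strongly connected, since that bound is uniform in $\sigma$ and hence applies to \textsc{OPT}.

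For the combinatorial bound, I would invoke Camion's theorem to obtain a Hamiltonian cycle $v_0 \to v_1 \to \cdots \to v_{n-1} \to v_0$ in the majority tournament, so that $w_{v_i v_{i+1}} \geq \tfrac{1}{2}$ for every $i$ (indices modulo $n$); in particular $w_{v_{n-1}v_0}\geq \tfrac12$. Applying the probability constraint to the triple $(v_{n-1}, v_j, v_0)$ gives, for each intermediate vertex $v_j$ with $1 \leq j \leq n-2$,
\[
w_{v_{n-1} v_j} + w_{v_j v_0} \;\geq\; w_{v_{n-1}v_0} \;\geq\; \tfrac{1}{2}.
\]
A case analysis based on whether each of these weights exceeds $\tfrac12$ would then show that each such $j$ contributes at least a positive constant to $\sum \min(w_{ij},w_{ji})$ via the two pair minima $\min(w_{v_{n-1}v_j},w_{v_jv_{n-1}})$ and $\min(w_{v_0 v_j},w_{v_j v_0})$. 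In the easy case where both weights are at most $\tfrac12$ the two minima equal these weights themselves and sum to at least $\tfrac12$. The harder subcases, where one of the weights exceeds $\tfrac12$, call for additional probability constraints involving adjacent cycle arcs (for example $w_{v_j v_{j+1}} \leq w_{v_j v_{n-1}} + w_{v_{n-1}v_{j+1}}$ or $w_{v_{j-1}v_j} \leq w_{v_{j-1}v_0} + w_{v_0 v_j}$) to force at least one of the relevant minima to be bounded below by an absolute constant. Summing the resulting per-$j$ contributions over $j=1,\ldots,n-2$ then produces the required linear-in-$n$ lower bound on $\sum \min(w_{ij},w_{ji})$, and the lemma follows.

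The main obstacle is exactly this case analysis: the raw inequality $w_{v_{n-1}v_j} + w_{v_j v_0} \geq \tfrac12$ does not by itself control the two pair minima once either weight is large, so the argument must carefully reroute through probability constraints arising from neighboring Hamiltonian cycle arcs. If the two pairs $\{v_{n-1},v_j\}$ and $\{v_0,v_j\}$ alone do not suffice to rule out all subcases, one may need to also charge the contribution to minima of further chord pairs $\{v_{n-1},v_{j+1}\}$ or $\{v_{j-1},v_0\}$ around the cycle, taking care to avoid double-counting. Once a uniform per-$j$ constant lower bound is in place, the $\Omega(n)$ bound on the sum of minima and hence the $\Omega(mn)$ lower bound on \textsc{OPT} follows immediately by summation.
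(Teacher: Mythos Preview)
Your reduction to the inequality $\sum_{\{i,j\}}\min(w_{ij},w_{ji})=\Omega(n)$ is valid, but the proposal stops well short of establishing that inequality.  The only case you actually handle is $w_{v_{n-1}v_j},\,w_{v_jv_0}\le \tfrac12$, and you explicitly flag the remaining subcases as the ``main obstacle.''  That obstacle is real: take the cycle $v_0\to v_1\to\cdots\to v_{n-1}\to v_0$ and a fixed $j$ with $w_{v_{n-1}v_j}$ close to~$1$ and $w_{v_jv_0}$ close to~$0$; then both minima $\min(w_{v_{n-1}v_j},w_{v_jv_{n-1}})$ and $\min(w_{v_0v_j},w_{v_jv_0})$ are close to~$0$, so the two designated pairs contribute essentially nothing.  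The probability constraints do force some \emph{other} pair nearby to have a large minimum (for instance, $w_{v_{n-1}v_j}=1$ together with $w_{v_jv_{j+1}}\ge\tfrac12$ forces $w_{v_{n-1}v_{j+1}}\ge\tfrac12$ and in fact pins $w_{v_{j+1}v_j}$ near $\tfrac12$), but turning this into a uniform per-$j$ charge without double counting is exactly the work you have not done.  As written, the proposal is a plan with an acknowledged gap, not a proof.

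The paper's argument avoids this detour entirely and is considerably more direct.  It fixes an arbitrary permutation $\sigma$ (reduced by adjacent swaps so that consecutive elements are majority-ordered), uses strong connectivity to get a directed path in the majority tournament from the top element $\alpha_0$ to the bottom element $\alpha_{n-1}$, and then, for each arc of that path that goes ``backward'' with respect to $\sigma$, charges at least $\tfrac{m}{2}$ times the length of the gap it spans: in each of the $\ge m/2$ input permutations witnessing that majority arc, every element lying between its endpoints in $\sigma$ creates an inversion with one endpoint or the other.  Summing telescopes to $\tfrac{m(n-1)}{2}$.  No Hamiltonian cycle, no pair-by-pair $\min$ bound, and no charging scheme are needed.
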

\begin{proof}
Consider any permutation $\sigma$ and denote by $\alpha_i=\sigma^{-1}(i)$ the element ranked $i$th by $\sigma$. We may assume that $\alpha_0\succ\alpha_1\succ\cdots\succ\alpha_{n-1}$, as else swapping $\alpha_i,\alpha_{i+1}$ would decrease the cost of the permutation. However, strong connectivity implies that there exists a directed path $\alpha_0=\alpha_{i_0},\alpha_{i_1},\ldots,\alpha_{i_t}=\alpha_{n-1}$ in the majority tournament. Suppose first that $t=1$: then there are at least $\frac{m}{2}$ permutations in which $\alpha_{n-1}$ is ranked above $\alpha_0$. This would imply that the distance to $\sigma$ is at least 1 in these permutations. However, each of the rankings also includes $\alpha_1,\ldots,\alpha_{n-2}$ and ranks each such $\alpha_i$ either above $\alpha_0$, below $\alpha_{n-1}$ or both. In either case, each of these elements, together with at least one of $\alpha_0$ and $\alpha_{n-1}$, induce a distance of at least $1$ to $\sigma$. This adds up to a combined distance of $(n-1)$, giving a cost for $\sigma$ of at least $\frac{m(n-1)}{2}$.

For arbitrary $t$, we can make the same argument: if $i_j<i_{j+	1}$, then in each of the $\frac{m}{2}$ permutations, in which $\alpha_{i_j}$ is ranked below $\alpha_{i_{j+1}}$, each $\alpha_k\in\{\alpha_{i_j+1},\ldots,\alpha_{i_{j+1}}\}$ is ranked either above $\alpha_{i_j}$ or below $\alpha_{i_{j+1}}$, inducing a cost for $\sigma$ of at least $\frac{m}{2}(i_{j+1}-i_j)$. Summing over all such $j$, we find that $\sigma$ has a cost of at least $\frac{m(n-1)}{2}$.
%
\end{proof}

\begin{theorem}
With the assumption that the majority tournament is strongly connected, any solution is an $O(n)$-approximation guarantee.
\end{theorem}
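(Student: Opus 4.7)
The plan is to combine the lower bound on OPT proved in the previous lemma with a trivial upper bound on the cost of \emph{any} permutation. Specifically, I would first note that for any two permutations $\sigma,\pi$ on $[n]$, the Kendall distance $K(\sigma,\pi)$ counts pairs $\{u,v\}\subseteq[n]$ that are ordered differently by $\sigma$ and $\pi$, so it is bounded trivially by $\binom{n}{2}$.

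Next, I would use this bound termwise on the definition of $c_R(\sigma)=\sum_{i=1}^m K(\sigma,\pi_i)$ to conclude that every permutation $\sigma$ satisfies
\[
c_R(\sigma)\leq m\binom{n}{2}=O(mn^2).
\]
This upper bound is valid for any input at all, with or without strong connectivity of the majority tournament.

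Finally, I would invoke the previous lemma, which asserts that under the strong-connectivity assumption $\text{OPT}=\Omega(mn)$, and divide:
\[
\frac{c_R(\sigma)}{\text{OPT}}\leq\frac{O(mn^2)}{\Omega(mn)}=O(n),
\]
which is precisely the claimed $O(n)$-approximation statement. There is no real obstacle here: the nontrivial content sits in the lemma bounding OPT from below, which has already been established, and the matching upper bound on the cost of any permutation is essentially just the combinatorial observation that no permutation disagrees with another on more than $\binom{n}{2}$ pairs.
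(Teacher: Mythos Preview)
Your proposal is correct and is essentially identical to the paper's own proof: bound the cost of any permutation by $m\binom{n}{2}$ using the trivial $\binom{n}{2}$ bound on Kendall distance, then divide by the $\Omega(mn)$ lower bound on OPT from the preceding lemma to obtain the $O(n)$ ratio.
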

\begin{proof}
The result is straightforward, since the distance between any two permutations is at most $\binom{n}{2}=O(n^2)$ and thus any solution returned can incur cost at most $m\binom{n}{2}$.
\end{proof}

\begin{corollary}
MC4 is a $O(n)$-approximation for the full rank aggregation problem.
\end{corollary}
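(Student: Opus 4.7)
The plan is to lift the preceding theorem from the strongly connected case to arbitrary instances, exploiting the strongly connected component (SCC) decomposition that is already built into MC4's handling of reducible chains. Let $C_1,\ldots,C_k$ be the SCCs of the majority tournament listed in topological order, chosen so that $i\succ j$ for every $i\in C_a$, $j\in C_b$ with $a<b$.

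The first step is to show that $\pi_{MC4}$ and some optimal permutation $\sigma^*$ agree on the inter-SCC ordering. On the MC4 side this is immediate from the algorithm's recursive handling of the SCC DAG described after the definition: since the SCCs of the MC4 transition graph coincide with those of the majority tournament, all of $C_a$ is ranked above all of $C_b$ whenever $a<b$. For $\sigma^*$ I would use a standard exchange argument: any inter-SCC inversion, i.e., some $j\in C_b$ placed above some $i\in C_a$ with $a<b$, can be eliminated by a sequence of adjacent swaps, each of which only undoes a majority-preferred ordering between elements of different SCCs, so each swap weakly decreases the total cost.

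The second step is to split $c_R$ into its inter-SCC and intra-SCC contributions. Because the rearranged $\sigma^*$ and $\pi_{MC4}$ now agree on every inter-SCC pair, their inter-SCC costs are identical. For each $C_a$ the induced sub-tournament is strongly connected by definition, so the preceding theorem applies to the rank aggregation sub-instance on $C_a$: any permutation of $C_a$ --- in particular the restriction of $\pi_{MC4}$ --- has intra-$C_a$ cost at most $O(|C_a|)\le O(n)$ times the intra-$C_a$ cost of $\sigma^*$. Summing over $a$ and adding back the matched inter-SCC contributions yields $c_R(\pi_{MC4})\le O(n)\cdot c_R(\sigma^*)$, which is exactly the desired bound.

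The main obstacle I expect is bookkeeping rather than combinatorial: one must verify that MC4's recursive prescription really does place the SCCs in the tournament-consistent topological order used above, and that the ``Markov chain on a source component'' invoked inside the recursion is equivalent to running the rank aggregation problem on the sub-instance obtained by restricting the input permutations to that SCC, so that the preceding theorem can be applied verbatim. Once that definitional alignment is pinned down, the inequalities above fit together immediately and the corollary follows.
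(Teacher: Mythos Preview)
Your argument is correct, but it does substantially more work than the paper. The paper offers no proof at all for this corollary: under the standing assumption from Section~\ref{sec:notation} that the majority tournament is strongly connected, the preceding theorem already says that \emph{every} permutation is an $O(n)$-approximation, so in particular the one output by MC4 is. The corollary is a one-line specialization.

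Your SCC decomposition instead lifts the statement to instances where strong connectivity may fail, which is a genuine (if routine) strengthening over what the paper actually proves. One simplification worth noting: since the preceding theorem bounds \emph{any} ordering of a strongly connected sub-instance, you do not need the within-$C_a$ ordering produced by MC4's recursion to coincide with what MC4 would do on the standalone sub-instance. Whatever order MC4 places on $C_a$ is automatically $O(|C_a|)$-approximate for the restricted problem. Hence the second half of your flagged obstacle --- checking that the recursed chain matches the sub-instance chain --- is unnecessary; the only definitional point you must pin down is that MC4 places the SCCs themselves in the majority-consistent topological order.
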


\section{Sorting by Copeland score}\label{copelandsection}

In this section, we will show that sorting by Copeland score gives a constant factor approximation guarantee. To do so we will utilize a result by Coppersmith et al.\ \cite{CFR} proving that sorting by Borda scores gives a 5-approximation to the weighted feedback arc set problem as well as two lemmas from their proof.

Recall that $c_R(\sigma)$ and $c_G(\sigma)$ denote the cost of permutation $\sigma$ for the the rank aggregation problem and the corresponding FAS-problem on $G$, respectively.   We will  also need some additional notation: let $c_T(\sigma)$ denotes the cost of permutation $\sigma$ when solving the FAS-problem on the majority tournament $T$. $OPT_R,OPT_G, OPT_T$ will denote the respective optimal permutations for the rank-aggregation, the corresponding FAS-problem and the FAS-problem on the majority tournament respectively.

\begin{lemma}[Coppersmith et al.\ \cite{CFR}]\label{majoptbound}
$\forall\sigma\; c_T(\sigma)\leq 2c_G(\sigma)$. Notice that the inequality implies $c_T(OPT_T)\leq 2c_G(OPT_G)$.
\end{lemma}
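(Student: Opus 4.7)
The plan is to establish the inequality pair-by-pair, i.e.\ to show that each ordered pair of elements contributes at most twice as much to $c_G(\sigma)$ as to $c_T(\sigma)$. Writing both costs as sums over the same index set ought to reduce the statement to a trivial case analysis.

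Concretely, I would first rewrite $c_T(\sigma)$ as $\sum_{i,j:\sigma(i)<\sigma(j)} \mathbf{1}[j \succ i]$, since by definition of the majority tournament $T$ the arc between $i$ and $j$ is a back-arc under $\sigma$ exactly when $\sigma$ places $i$ before $j$ but a majority of input rankings places $j$ above $i$. Meanwhile $c_G(\sigma) = \sum_{i,j:\sigma(i)<\sigma(j)} w_{ji}$ by definition. Now fix an arbitrary pair with $\sigma(i) < \sigma(j)$. If $j \succ i$, then the fraction $w_{ji}$ of input permutations ranking $j$ above $i$ is at least $1/2$ (the footnote's tie-breaking handles the borderline $m$-even case), so $\mathbf{1}[j \succ i] = 1 \leq 2 w_{ji}$. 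If $j \not\succ i$, then the indicator contributes $0$, and the inequality $0 \leq 2 w_{ji}$ is immediate since $w_{ji} \geq 0$. Summing the pointwise bounds over all pairs gives $c_T(\sigma) \leq 2 c_G(\sigma)$.

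For the remark following the inequality, I would simply chain $c_T(OPT_T) \leq c_T(OPT_G) \leq 2 c_G(OPT_G)$, where the first step uses that $OPT_T$ is optimal for $c_T$ and the second is the lemma applied to $\sigma = OPT_G$. I do not anticipate a serious obstacle: the only subtlety is ensuring that the tie-breaking convention for $\succ$ introduced in the footnote is compatible with $w_{ji} \geq 1/2$ (which it is, since a tie gives $w_{ji} = w_{ij} = 1/2$ and the convention only decides which side $\succ$ falls on). Everything else is essentially bookkeeping.
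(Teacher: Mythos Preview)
Your argument is correct and follows exactly the paper's own approach: a pair-by-pair comparison in which each pair contributes $0$ or $1$ to $c_T(\sigma)$, and whenever it contributes $1$ the corresponding $w_{ji}$ is at least $\tfrac{1}{2}$. Your derivation of the consequence via $c_T(OPT_T)\le c_T(OPT_G)\le 2c_G(OPT_G)$ is also the intended chain.
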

\begin{proof}
We will prove the inequality by considering how much each edge of the majority tournament contributes to the respective cost functions.  An edge contributes either 0 or 1 to the left-hand side. Whenever it contributes 1 to the left-hand side exactly, it contributes at least $\frac{1}{2}$ to the right-hand side, and so the statement follows.
\end{proof}

\begin{lemma}[Coppersmith et al.\ \cite{CFR}]\label{majoptbound2}
$\forall \sigma\; c_T(\sigma)\geq c_G(\sigma)-c_G(OPT_G)$
\end{lemma}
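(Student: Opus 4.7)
The plan is to prove the inequality on a pair-by-pair basis: I want to argue that for every unordered pair $\{i,j\}$, the pair's contribution to $c_T(\sigma)$ on the left is at least its contribution to $c_G(\sigma) - c_G(OPT_G)$ on the right. Fix such a pair and assume without loss of generality that $i \succ j$, so that $w_{ij} \geq 1/2 \geq w_{ji}$ (using the probability constraint $w_{ij}+w_{ji}=1$). The pair's contribution to $c_T(\sigma)$ is $1$ if $\sigma$ ranks $j$ above $i$ (disagreeing with the majority) and $0$ otherwise, while its contribution to $c_G(\sigma)-c_G(OPT_G)$ vanishes whenever $\sigma$ and $OPT_G$ order the pair the same way.

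The substantive work is a four-case analysis indexed by how $\sigma$ and $OPT_G$ each order the pair relative to the majority direction $i\succ j$. When the two orderings agree, the RHS contribution is $0$ and the claim is immediate. When $\sigma$ agrees with the majority but $OPT_G$ does not, the pair contributes $0$ to the LHS and $w_{ji} - w_{ij} \leq 0$ to the RHS, so the inequality holds trivially. The only nontrivial case is the opposite one: $\sigma$ disagrees with the majority while $OPT_G$ agrees. Here the LHS gets a full $+1$, and the RHS gets $w_{ij} - w_{ji} = 2w_{ij} - 1 \leq 1$, since $w_{ij} \in [0,1]$. Summing these pointwise inequalities over all unordered pairs yields $c_T(\sigma) \geq c_G(\sigma) - c_G(OPT_G)$.

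There is no real obstacle; the calculation is elementary case analysis. The conceptual point worth highlighting is that the $\{0,1\}$-valued tournament weights are just large enough to absorb the worst-case pair-by-pair slack $w_{ij}-w_{ji} \leq 1$ between any two orderings in the weighted instance, which is exactly what makes the tournament cost an ``upper bound on regret'' relative to $OPT_G$.
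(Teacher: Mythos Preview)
Your proof is correct and follows essentially the same approach as the paper: an edge-by-edge comparison showing that each unordered pair's contribution to $c_T(\sigma)$ dominates its contribution to $c_G(\sigma)-c_G(OPT_G)$. The only difference is organizational---the paper splits into two cases according to whether the pair contributes $0$ or $1$ to $c_T(\sigma)$ (handling both possibilities for $OPT_G$ at once via the observation that a contribution of at most $\tfrac{1}{2}$ to $c_G(\sigma)$ is automatically dominated by the pair's contribution to $c_G(OPT_G)$), whereas you make the $2\times 2$ split by how each of $\sigma$ and $OPT_G$ orders the pair explicit.
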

\begin{proof}
Notice that an edge contributes either 0 or 1 to $c_T(\sigma)$. If it contributes $0$, then it contributes at most $\frac{1}{2}$ to $c_G(\sigma)$ and thus, at least as much to $c_G(OPT_G)$ as to $c_G(\sigma)$. Notice that if the edge contributes $1$ to $c_T(\sigma)$, the right-hand-side is bounded above by 1.
\end{proof}

\begin{theorem}
The approximation guarantee of sorting by Copeland score is at most 11.
\end{theorem}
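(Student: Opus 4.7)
The plan is to bound $c_G(\sigma_C) \le 11\,c_G(OPT_G)$, where $\sigma_C$ is the permutation returned by sorting by Copeland score; since $c_R = m\,c_G$ for every permutation, this gives the same approximation ratio on the rank aggregation cost. The starting observation is that $|P_i|$ equals the indegree of $i$ in the majority tournament $T$, so Copeland applied to the input is precisely Borda applied to the $0/1$-weighted graph $T$. Granting this, the chain
\[
c_G(\sigma_C) \;\le\; c_T(\sigma_C) + c_G(OPT_G) \;\le\; 5\,c_T(OPT_T) + c_G(OPT_G) \;\le\; 10\,c_G(OPT_G) + c_G(OPT_G) \;=\; 11\,c_G(OPT_G)
\]
finishes the theorem: the first inequality is Lemma~\ref{majoptbound2}, the third is the consequence $c_T(OPT_T)\le 2\,c_G(OPT_G)$ of Lemma~\ref{majoptbound}, and the middle inequality is the self-contained claim that Copeland is a $5$-approximation for the FAS problem on $T$.

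All of the work is therefore in establishing $c_T(\sigma_C)\le 5\,c_T(OPT_T)$, for which I would adapt the Coppersmith--Fleischer--Rudra proof that Borda is a $5$-approximation for weighted FAS. In their argument, whenever the indegree-sort misorders a pair $(u,v)$ relative to the optimum, the cost incurred is charged to the witness vertices whose contributions to the Borda scores of $u$ and $v$ account for the inversion, and a double-counting sum over pairs yields the constant $5$. Rerunning this analysis with all weights specialised to $\tilde w_{ij}\in\{0,1\}$, and using that $T$ still satisfies $\tilde w_{ij}+\tilde w_{ji}=1$, should produce the desired bound for Copeland on $T$.

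The main obstacle is that CFR state their theorem under the probability constraints, which include the triangle inequality $w_{ij}+w_{jk}\ge w_{ik}$; this inequality fails on $T$ around every directed $3$-cycle of the majority tournament. I would address this by auditing their proof and checking, step by step, that either the triangle inequality is not actually invoked in the $0/1$ specialisation or each invocation can be replaced by a direct tournament-combinatorics argument exploiting the structure of $T$. If some step still leaks a constant, the slack between $10$ and $11$ in the chain above may well be able to absorb it. This careful translation of the CFR argument from probability-constrained weights to the $0/1$ tournament $T$ is where the substantive effort sits; everything else in the theorem is the three-line chain.
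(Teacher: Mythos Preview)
Your three-line chain is exactly the paper's argument, just rearranged: the paper writes it as $8\,c_G(OPT_G)\ge 4\,c_T(OPT_T)\ge c_T(\sigma_C)-c_T(OPT_T)\ge c_G(\sigma_C)-3\,c_G(OPT_G)$, which unwinds to the same $11$.

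The obstacle you anticipate, however, is not there. Coppersmith--Fleischer--Rudra's $5$-approximation for Borda is proved for weighted tournaments satisfying only $w_{ij}+w_{ji}=1$; the triangle inequality is never used in their argument. Since $T$ has $\{0,1\}$ weights with $\tilde w_{ij}+\tilde w_{ji}=1$, the bound $c_T(\sigma_C)\le 5\,c_T(OPT_T)$ is a direct citation, and the ``careful translation'' you propose as the substantive effort is unnecessary. The paper accordingly invokes the CFR result without further comment.
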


\begin{proof}
We start by observing that $c_R(\sigma) = mc_G(\sigma)$ for any permutation $\sigma$, and thus $c_R(OPT_R) = mc_G(OPT_G)$, since the optimal solutions for the two problems are the same permutation.  Then
\begin{align*}
8c_R(OPT_R)& = 8mc_G(OPT_G)\\
& \geq 4m\big(c_T(OPT_T)\big)\\
& \geq m\big(c_T(Borda_T)-c_T(OPT_T)\big)\\
& =m \big(c_T(Copeland)-c_T(OPT_T)\big)\\
& \geq m\big(c_G(Copeland)-c_G(OPT_G)-c_T(OPT_T)\big)\\
& \geq m\big(c_G(Copeland)-c_G(OPT_G)-2c_G(OPT_G)\big)\\
& = c_R(Copeland)-3c_R(OPT_R)\qed
\end{align*}
Above, the first inequality follows from Lemma \ref{majoptbound} and the second from Lemma \ref{majoptbound2}. The second equality follows from the observation that Copeland and Borda score are equivalent with respect to the cost function $c_T$. The next inequality follows from Lemma \ref{majoptbound2}, and the last inequality follows from Lemma \ref{majoptbound}.
\end{proof}

\section{MC4 as a generalization of Copeland score}\label{mc4gensection}

The construction in Section \ref{mc4section} implies not only that there is no constant approximation guarantee for MC4, but also that there is no constant $c$ such that $|P_i|>c|P_j|\Rightarrow x_i>x_j$. For example, $x_{n-1}$ is ranked above $x_{\sqrt{cn}+2}$, despite $|P_{\sqrt{cn}+2}|\in\Omega(n),|P_{n-1}|=1$. From this perspective, MC4 is indeed far away from being a generalization of sorting by Copeland score. However, some implementations of the algorithm, as that of Schalekamp and van Zuylen \cite{Anke}, implement MC4 with a restart probability $\delta\in[0,1)$. We will denote such implementation as MC4$_\delta$. The transition matrix of MC4$_\delta$ is given as:

\[p_{ij}=\begin{cases}
\frac{1}{n} \text{ if } i\in P_j\\
\frac{\delta}{n} \text{ if } i\neq j,i\not\in P_j\\
1-\sum_{k:i\neq k}p_{ik}\text{ if } i=j.
\end{cases}
\]

We will now show that this allows two ways in which MC4 can yet be seen as a generalization of sorting by Copeland score. These will imply in particular that the linear lower bound  no longer applies when $\delta>0$. The constant lower bound in Lemma \ref{newcopelandlowerbound}, however, continues to hold for all $\delta>0$.

\begin{theorem}\label{MC4largedelta}
$\forall n,\forall\delta\in(1-\frac{1}{2n+1},1)$, a permutation output by MC4$_\delta$ can also be output when sorting by Copeland score. In particular, we then have that $|P_i|>|P_j|\implies x_i>x_j$, implying that MC4$_\delta$ is a constant-factor approximation.
\end{theorem}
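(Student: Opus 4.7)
The task is to establish the ``in particular'' claim that $|P_i|>|P_j|$ implies $x_i>x_j$; once this is shown, any permutation produced by MC4$_\delta$ agrees, up to tie-breaking, with one obtainable by sorting in non-increasing Copeland score, and therefore inherits the $11$-approximation guarantee from Section~\ref{copelandsection}.

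Write $\epsilon=1-\delta\in(0,\tfrac{1}{2n+1})$, $p_k=|P_k|$, and $S_k=\sum_{l\in P_k}x_l$. A direct manipulation of $xP_{MC4_\delta}=x$, together with $\sum_l x_l=1$, yields the fixed-point identity
\[x_j\bigl(n-\epsilon(1+p_j)\bigr)=\epsilon S_j+(1-\epsilon).\]
Subtracting this equation for two indices $i,j$ and using $p_ix_i-p_jx_j=(p_i-p_j)x_i+p_j(x_i-x_j)$ gives
\[(x_i-x_j)\bigl(n-\epsilon(1+p_j)\bigr)=\epsilon\bigl[(p_i-p_j)x_i+S_i-S_j\bigr].\]
For $\epsilon<1/(2n+1)$ the left-hand coefficient is strictly positive, so it suffices to show that $(p_i-p_j)x_i+S_i-S_j>0$ whenever $p_i>p_j$.

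Let $m=\min_k x_k$ and $M=\max_k x_k$. The identity $|P_i\setminus P_j|-|P_j\setminus P_i|=p_i-p_j$ together with pointwise bounds on each $x_l$ yields
\[S_i-S_j\ge (p_i-p_j)\,m-(M-m)\,|P_j\setminus P_i|.\]
Combining with $(p_i-p_j)x_i\ge (p_i-p_j)m$, using $|P_j\setminus P_i|\le p_j\le n-2$ (the latter because $p_j<p_i\le n-1$), and noting $p_i-p_j\ge 1$, the quantity of interest is at least $2m-(M-m)(n-2)$, which is positive precisely when $M/m<n/(n-2)$.

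It remains to control $M/m$ using the same fixed-point identity. Evaluating it at $k^{*}=\arg\max x_k$ and $k_{*}=\arg\min x_k$, and using $p_k m\le S_k\le p_k M$, yields
\[M\le\frac{1-\epsilon}{n-\epsilon(2n-1)}\qquad\text{and}\qquad m\ge\frac{1-\epsilon}{n-\epsilon},\]
so $M/m\le(n-\epsilon)/(n-\epsilon(2n-1))$. A short algebraic check reduces the desired inequality $M/m<n/(n-2)$ to $\epsilon(n^2-n+1)<n$, which is implied by $\epsilon<1/(2n+1)$ since $n^2-n+1<n(2n+1)$. The main delicacy is balancing the ``noise'' term $(M-m)|P_j\setminus P_i|$ against the ``Copeland signal'' $(p_i-p_j)m$: using the sharp bound $p_j\le n-2$ rather than the trivial $p_j\le n-1$ is what makes the threshold $\delta>1-1/(2n+1)$ tight enough for the argument to go through.
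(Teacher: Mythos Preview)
Your argument is correct. You derive the same fixed-point identity $x_j\bigl(n-\epsilon(1+p_j)\bigr)=\epsilon S_j+(1-\epsilon)$ as the paper, but then diverge in how you extract the comparison $x_i>x_j$. The paper bounds each coordinate individually, showing $|x_k-\tfrac{1}{n}|<\tfrac{1}{2n^2}$ whenever $\delta>\tfrac{2n}{2n+1}$, and then plugs these sandwich bounds back into the closed form for $x_i$ and $x_j$ to compare numerators and denominators directly. You instead subtract the two fixed-point equations, isolate $(x_i-x_j)$, and reduce the problem to controlling the ratio $M/m$; the same fixed-point identity, evaluated at the maximizer and minimizer, then gives $M/m\le(n-\epsilon)/(n-\epsilon(2n-1))$, and your algebraic reduction to $\epsilon(n^2-n+1)<n$ is clean. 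The two routes are essentially dual—``bound the spread around $1/n$'' versus ``bound the multiplicative spread $M/m$''—and land on the same threshold. Your explicit subtraction makes the role of the Copeland gap $p_i-p_j$ more transparent, and your observation that $p_j\le n-2$ (rather than $n-1$) is exactly what is needed, whereas the paper leaves the verification of its deviation bound $|x_k-\tfrac1n|<\tfrac{1}{2n^2}$ to the reader.
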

\begin{proof}
We will use $1-\sum_{i:i\neq j}x_i=x_j$ and, as in the argument in Lemma \ref{mc4charlemma}, $\sum_{i:j\neq i}p_{ji}=\frac{\delta|P_j|+n-1-|P_j|}{n}=\frac{(n-1)+|P_j|(\delta-1)}{n}$
,  to write $x_j$ as
\begin{multline*}
x_j=\sum_ix_ip_{ij}=\sum_{i\in P_j}\frac{x_i}{n}(1-\delta)+\sum_{i:i\neq j}\frac{x_i}{n}\delta +x_j(1-\sum_{i:j\neq i}p_{ji})
\\
\implies x_j=\frac{\sum_{i\in P_j}x_i(1-\delta)+(1-x_j)\delta}{(n-1)+|P_j|(\delta-1)}=\frac{\delta-\delta x_j+\sum_{i\in P_j}x_i(1-\delta)}{(n-1)+|P_j|(\delta-1)}\\
\implies x_j\left(\frac{(n-1)+|P_j|(\delta-1)+\delta}{(n-1)+|P_j|(\delta-1)}\right)=\frac{(1-\delta)\sum_{i\in P_j}x_i+\delta}{(n-1)+|P_j|(\delta-1)}\\
\implies x_j=\frac{(1-\delta)\sum_{i\in P_j}x_i+\delta}{(n-1)+|P_j|(\delta-1)+\delta}=\frac{\delta+(1-\delta)\sum_{i\in P_j}x_i}{n-(1-\delta)(|P_j|+1)}
\end{multline*}

But then, it is easy to see that as $\delta\to1$, $x_j\to\frac{1}{n}$. In particular, it can be shown that with $\delta>\frac{2n}{2n+1}$, it holds that $|x_j-\frac{1}{n}|<\epsilon=\frac{1}{2n^2}$. But then, $|P_i|>|P_j|$ implies
\[
x_i>\frac{\delta+(1-\delta)(\frac{1}{n}-\epsilon)|P_i|}{n-(1-\delta)(|P_i|+1)}
\]and
\[
x_j<\frac{\delta+(1-\delta)(\frac{1}{n}+\epsilon)|P_j|}{n-(1-\delta)(|P_j|+1)} \leq
\frac{\delta+(1-\delta)(\frac{1}{n}+\epsilon)|P_j|}{n-(1-\delta)(|P_i|+1)}\leq
\frac{\delta+(1-\delta)(\frac{1}{n}-\epsilon)|P_i|}{n-(1-\delta)(|P_i|+1)} < x_i,
\] where the penultimate inequality holds
if $(\frac{1}{n}+\epsilon)|P_j|<(\frac{1}{n}-\epsilon)|P_i|$.
As $|P_j|<|P_i|\leq n-1$, the bound $\epsilon\leq \frac{1}{2n^2}$ implies $\epsilon(2|P_j|+1)<\frac{1}{n}$. Equivalently, $\epsilon |P_j| +\epsilon < \frac{1}{n}-\epsilon |P_j|$ or $(\frac{1}{n}+\epsilon)|P_j| < (\frac{1}{n}-\epsilon)(|P_j|+1)$. Since $|P_i|\geq |P_j|+1$, this implies the desired inequality.
\end{proof}

However, even for arbitrary $\delta>0$, we will now show that there is a way in which MC4 can be seen as a generalization of Copeland score:

\begin{theorem}
$\forall\delta>0\;\forall i,j: |P_i|>|P_j|\times\frac{1}{\delta^2}\implies x_i>x_j$, where $x_i,x_j$ result from the stationary distribution of MC4$_\delta$.
\end{theorem}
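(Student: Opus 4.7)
The plan is to mimic the algebraic setup from the proof of Theorem~\ref{MC4largedelta}. That argument never used that $\delta$ was close to $1$, so it gives us for free the identity
\[
x_j \;=\; \frac{\delta + (1-\delta)\,S_j}{D_j}, \qquad S_j := \sum_{k\in P_j} x_k, \qquad D_j := n - (1-\delta)(|P_j|+1),
\]
valid for every $\delta\in(0,1)$. Since $|P_j|+1\le n$ we have $D_j \ge \delta n > 0$, so the formula is well-defined, and the trivial estimate $S_j\le 1$ immediately gives a universal pointwise upper bound $x_k \le 1/D_k \le 1/(\delta n)$ on every stationary value. In the other direction, $S_k\ge 0$ and $D_k \le n$ yield $x_k \ge \delta/D_k \ge \delta/n$.

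Feeding these pointwise bounds back into the identity, I get $S_i \ge |P_i|\,\delta/n$ and $S_j \le |P_j|/(\delta n)$, hence
\[
x_i \;\ge\; \frac{\delta\bigl(n + (1-\delta)|P_i|\bigr)}{n\,D_i}, \qquad
x_j \;\le\; \frac{\delta^{2} n + (1-\delta)|P_j|}{\delta\, n\, D_j}.
\]
Cross-multiplying, canceling $n$, and then using the identity $D_j - D_i = (1-\delta)(|P_i|-|P_j|)$ (which lets me eliminate the $\delta^{2}n(D_j - D_i)$ term), the inequality $x_i > x_j$ reduces after dividing by $(1-\delta)$ to
\[
\delta^{2}\,|P_i|\,(n + D_j) \;>\; |P_j|\,(\delta^{2} n + D_i).
\]

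It remains to verify that $|P_i| > |P_j|/\delta^{2}$ suffices for this. Bounding $D_i \le n$ in the numerator and $D_j \ge \delta n$ in the denominator of $(\delta^{2} n + D_i)/(\delta^{2}(n + D_j))$ gives the sufficient condition $|P_i|/|P_j| > (1+\delta^{2})/(\delta^{2}(1+\delta))$, and since $\delta^{2}\le\delta$ for $\delta\in(0,1]$ we have $(1+\delta^{2})/(1+\delta)\le 1$, so this fraction is at most $1/\delta^{2}$. The hypothesis is therefore more than enough.

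The main obstacle will be choosing pointwise bounds strong enough to produce the $1/\delta^{2}$ factor rather than the cruder $1 + 1/\delta^{2}$ one gets from $x_k\le 1$, $x_k\ge 0$; in particular the upper bound $x_k\le 1/(\delta n)$ is the crucial ingredient, since it pulls an extra $\delta$ into the bound on $S_j$ that ultimately appears squared in the final comparison. The boundary case $\delta=1$ degenerates to the uniform chain with $x_i=x_j=1/n$, and is therefore outside the intended scope of the statement.
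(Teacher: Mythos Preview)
Your proof is correct and follows essentially the same approach as the paper: both derive the stationary identity $x_j=\frac{\delta+(1-\delta)S_j}{D_j}$, both extract the pointwise bounds $\delta/n\le x_k\le 1/(\delta n)$, and both feed these back in to compare $x_i$ with $x_j$. The only difference is cosmetic: the paper first replaces $D_i$ by the larger $D_j$ in the lower bound for $x_i$, so that both fractions share a denominator and the comparison reduces immediately to $|P_i|\cdot\frac{\delta}{n}>|P_j|\cdot\frac{1}{\delta n}$, whereas you cross-multiply and carry the algebra a few steps further before applying the cruder bounds $D_i\le n$, $D_j\ge\delta n$.
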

\begin{proof}
From the above expression of $x_i$ we can see that $x_i\geq \frac{\delta}{n}\;\forall i$. Similarly, using the expression and the facts that $\sum_k x_k=1$ and $|P_l|\leq n-1\;\forall l$, it can be shown that $x_i\leq \frac{1}{\delta n}\;\forall i$:
\[
x_i=\frac{\delta+(1-\delta)\sum_{k\in P_i}x_k}{n-(1-\delta)(|P_i|+1)}
\leq \frac{\delta+(1-\delta)}{n-(1-\delta)(|P_i|+1)}\leq\frac{1}{n-(1-\delta)n}=\frac{1}{\delta n}.
\]
Using those two bounds, we can then conclude that with $|P_i|>|P_j|\times \frac{1}{\delta^2}$:
\[
x_i=\frac{\delta+(1-\delta)\sum_{k\in P_i}x_k}{n-(1-\delta)(|P_i|+1)}
\geq \frac{\delta+(1-\delta)|P_i|\frac{\delta}{n}}{n-(1-\delta)(|P_i|+1)}
>\frac{\delta+(1-\delta)|P_i|\frac{\delta}{n}}{n-(1-\delta)(|P_j|+1)}\text{ and}
\]
\[
x_j=\frac{\delta+(1-\delta)\sum_{k\in P_j}x_k}{n-(1-\delta)(|P_j|+1)}\leq \frac{\delta+(1-\delta)|P_j|\frac{1}{\delta n}}{n-(1-\delta)(|P_j|+1)}
< \frac{\delta+(1-\delta)|P_i|\frac{\delta}{n}}{n-(1-\delta)(|P_j|+1)},
\]
which concludes the proof.
\end{proof}

\section{Conclusion}
\label{sec:conc}

We have shown that although MC4 performs extremely well in practice, its worst-case performance can be very bad.  This statement is similar to those made about other algorithms (for example, the running-time of the simplex method for linear programming).  It would be interesting to give some analytical explanation for the strong performance of MC4.  Perhaps one can characterize the types of input instances typically seen in practice, or perhaps a variant of the smoothed analysis of Spielman and Teng \cite{ST} used for the simplex method can be shown.  Another possible direction would be to characterize the settings of $\delta$ for which MC4 beats Copeland scoring, and thus has a constant performance guarantee.






\end{document}